\documentclass[11pt]{article}
\usepackage{amsfonts,amsmath,amssymb,graphicx,tabularx,url,epstopdf,enumerate}


\newcommand{\R}{{\mathbb R}}

\newtheorem{theorem}{\bf Theorem}[section]

\newtheorem{lemma}[theorem]{\bf Lemma}

\newtheorem{conjecture}[theorem]{\bf Conjecture}

\newcommand{\qed}{\hfill $\blacksquare$ \bigskip}

\begin{document}

\title{\bf{Path matrix and path energy of graphs}}

\author{
Aleksandar Ili\' c \\
Facebook Inc, Menlo Park, California, USA \\
e-mail: \tt{aleksandari@gmail.com}
\and
Milan Ba\v si\'c \\
Faculty of Sciences and Mathematics, University of Ni\v{s}, Serbia \\
e-mail: \tt{basic\_milan@yahoo.com}
}

\date{\today}
\maketitle

\begin{abstract}
Given a graph $G$, we associate to it a path matrix $P$ whose $(i, j)$ entry represents the maximum number of vertex disjoint paths between the vertices $i$ and $j$, with zeros on the main diagonal. In this note, we resolve four conjectures from [M. M. Shikare, P. P. Malavadkar, S. C. Patekar, I. Gutman, \emph{On Path Eigenvalues
and Path Energy of Graphs}, MATCH Commun. Math. Comput. Chem. {\bf 79} (2018), 387--398.] on the path energy of graphs and finally present efficient  $O(|E| |V|^3)$ algorithm for computing the path matrix used for verifying computational results.
\end{abstract}

{\bf Key words}: Graph energy; path graph. \vskip 0.1cm
{{\bf AMS Classifications:} 05C50.} \vskip 0.1cm

\section{Introduction}
\label{sec:intro}

Let $G$ be a simple graph with vertex set $V (G) = \{v_1, v_2, \ldots, v_n\}$. Define the matrix
$P = (p_{ij})$ of size $n \times n$ such that $p_{ij}$ is equal to the maximum number of vertex disjoint
paths from $v_i$ to $v_j$ for $i \neq j$, and $p_{ij} = 0$ if $i = j$. We say that $P = P(G)$ is the path matrix of the graph $G$ [16]. $P$ is a real and symmetric matrix and therefore has real spectrum
$$
Spec_{\bf P} = [\rho_1, \rho_2, \ldots, \rho_n].
$$

The path spectral radius of a graph is the largest eigenvalue $\rho = \rho(G)$ of the path matrix $P(G)$. Denote by $deg(v)$ the degree of the vertex $v$, and by $\Delta(G)$ the largest vertex degree. The ordinary energy $E(G)$, of a graph $G$ is defined to be the sum of the absolute values of the ordinary eigenvalues of $G$ \cite{Li12}. In analogy, the path energy $PE(G)$ is defined as
$$
PE(G) = \sum_{i = 1}^n |\rho_i|.
$$

Shikare et al. \cite{ShMa18} studied basic properties of the path matrix and its eigenvalues. Here we continue the study by focusing on extremal problems of path energy of graphs, and resolve open problems around the structure of general and unicyclic graphs attaining maximal or minimal values of $PE(G)$.

\section{Extremal values of path energy of graphs}
\label{sec:2}

The authors in \cite{ShMa18} proved the following two simple results.

\begin{theorem}
\label{thm1}
Let $G$ be a connected graph of order $n$. Then
\begin{enumerate}[(i)]
\item $\rho (G) \geq (n - 1)$, with equality if and only if $G$ is a tree of order $n$.
\item $\rho(G) \leq (n - 1)^2$, with equality if and only if $G$ is a complete graph $K_n$.
\end{enumerate}
\end{theorem}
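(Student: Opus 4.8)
The plan is to control the Perron root $\rho(G)$ through elementary bounds on the entries of the nonnegative symmetric matrix $P(G)$, and then to read off the extremal graphs from the equality cases. The first step is to record the entrywise bounds $1 \le p_{ij} \le \min(\deg(v_i), \deg(v_j)) \le \Delta(G) \le n-1$, valid for every pair $i \ne j$. The lower bound $p_{ij} \ge 1$ is exactly the connectivity of $G$; the upper bound holds because any family of internally vertex disjoint $v_i$--$v_j$ paths leaves $v_i$ through pairwise distinct neighbours, so its size cannot exceed $\deg(v_i)$, and symmetrically $\deg(v_j)$. (Equivalently, one counts the direct edge plus at most $n-2$ distinct internal vertices.)

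For part (i) I would bound $\rho$ from below by the Rayleigh quotient of $P$ at the all-ones vector $\mathbf 1$, namely $\rho(G) \ge \frac{\mathbf 1^{\top} P \mathbf 1}{\mathbf 1^{\top}\mathbf 1} = \frac{1}{n}\sum_{i \ne j} p_{ij} \ge \frac{n(n-1)}{n} = n-1$, using that there are $n(n-1)$ ordered off-diagonal entries, each at least $1$. If equality $\rho = n-1$ holds, the squeeze forces $\sum_{i\ne j} p_{ij} = n(n-1)$, hence $p_{ij}=1$ for all pairs. A cycle in $G$ would give two vertex disjoint paths between two of its vertices, so $G$ must be acyclic; being connected, $G$ is a tree. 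Conversely, in a tree every pair is joined by a unique path and by no two disjoint ones, so $P = J - I$, whose spectrum consists of $n-1$ once and $-1$ with multiplicity $n-1$, and thus $\rho = n-1$.

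For part (ii) I would invoke the standard fact that the spectral radius of a nonnegative matrix does not exceed its maximum row sum: $\rho(G) \le \max_i \sum_{j} p_{ij} \le (n-1)^2$, since each of the $n-1$ off-diagonal entries in a row is at most $n-1$. For the equality case, $\rho = (n-1)^2$ squeezes $\max_i \sum_j p_{ij} = (n-1)^2$, so some row $i$ has all its off-diagonal entries equal to $n-1$; then $p_{ij}=n-1$ together with $p_{ij}\le \min(\deg(v_i),\deg(v_j))$ forces $\deg(v_i)=n-1$ and $\deg(v_j)=n-1$ for that fixed $i$ and every $j$, i.e.\ every vertex has full degree and $G=K_n$. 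Conversely, in $K_n$ there are exactly $n-1$ vertex disjoint paths between any two vertices (the edge together with $n-2$ length-two detours), so $P=(n-1)(J-I)$ and $\rho=(n-1)^2$.

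The routine portions are the two eigenvalue bounds; the point requiring care is the equality analysis, i.e.\ turning the entrywise conditions into graph structure. I expect this to be the main obstacle, and the argument is deliberately arranged so that only the elementary squeeze between $\rho$ and the row-sum (resp.\ Rayleigh) bound is needed, which sidesteps any appeal to uniqueness of the Perron eigenvector. The only genuinely graph-theoretic inputs are that $p_{ij}\ge 2$ for two vertices on a common cycle and that $p_{ij}\le \min(\deg(v_i),\deg(v_j))$, both immediate from Menger-type reasoning; the computations worth spelling out are the value $p_{ij}=n-1$ in $K_n$ and the spectrum of $J-I$.
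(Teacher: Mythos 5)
Your proof is correct and complete in both directions of both equality cases. Note, however, that the paper itself offers no proof of this theorem: it is stated as a known result imported verbatim from \cite{ShMa18} (``the authors in \cite{ShMa18} proved the following two simple results''), so there is nothing in this paper to compare your argument against. On its own merits the argument is sound: the Rayleigh quotient at $\mathbf 1$ gives $\rho \ge \frac{1}{n}\sum_{i\ne j}p_{ij}\ge n-1$, the maximum row sum gives $\rho \le (n-1)^2$, and in each case the squeeze correctly forces the entrywise extremes ($p_{ij}\equiv 1$, hence acyclic, hence a tree; resp.\ $p_{ij}=n-1$ in some row, hence all degrees equal $n-1$, hence $K_n$), while the converses follow from $P$ being a scalar multiple of the matrix the paper calls $J_n$, whose spectrum is $\{n-1,(-1)^{n-1}\}$. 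A minor virtue worth keeping is that your equality analysis needs only these elementary squeezes and never invokes Perron--Frobenius uniqueness.
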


\begin{theorem}
\label{thm2}
Let G be a graph with vertex set $V(G)$. Then for all $u, v \in V$, it holds that
$$p_{uv} (G) \leq \min \{ deg (u), deg(v) \}$$.
\end{theorem}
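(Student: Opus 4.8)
The plan is to bound the number of vertex-disjoint paths by a simple counting argument carried out at each of the two endpoints. By definition, $p_{uv}(G)$ is the maximum size of a family of paths joining $u$ and $v$ that are pairwise disjoint apart from the shared endpoints, so I would fix a maximum such family $\{Q_1, Q_2, \ldots, Q_k\}$ with $k = p_{uv}(G)$ and then analyze how these $k$ paths leave the vertex $u$. The whole difficulty, if any, is purely in bookkeeping the edges incident to $u$.

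The key step is the observation that each path $Q_i$ contributes exactly one edge incident to $u$, namely its first edge, joining $u$ either to its first internal vertex or directly to $v$ (in the case $Q_i$ is the single edge $uv$). First I would argue that these $k$ first edges are pairwise distinct: if two paths $Q_i$ and $Q_j$ shared a first edge, they would share the neighbor of $u$ lying on that edge, contradicting the assumption that the paths meet only in $\{u, v\}$ when that neighbor is distinct from $v$, while if both first edges were the edge $uv$ the two paths would coincide. Hence the $k$ first edges are distinct edges at $u$, which forces $k \le deg(u)$. Applying the symmetric argument at $v$ gives $k \le deg(v)$, and combining the two bounds yields $p_{uv}(G) = k \le \min\{deg(u), deg(v)\}$, as claimed.

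I expect no serious obstacle, since the argument is essentially a pigeonhole count on the edges at an endpoint. The only point deserving care is the precise reading of \emph{vertex-disjoint paths}, that is, whether the shared endpoints $u$ and $v$ are permitted. Under the standard Menger-style convention that the paths are internally disjoint and meet only in $\{u, v\}$, the direct-edge case fits uniformly into the framework because the edge $uv$ still occupies one of the edges incident to $u$. This is the reading consistent with the definition of the path matrix in the introduction, so the bound follows directly.
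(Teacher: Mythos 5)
Your argument is correct and complete: under the internally-disjoint reading of ``vertex disjoint paths'' (which is the intended one), the $k$ paths in a maximum family use $k$ pairwise distinct edges at $u$, giving $k \le deg(u)$, and symmetrically $k \le deg(v)$. Note that the paper does not actually prove this statement --- it imports it from the cited reference of Shikare et al.\ --- so there is no in-paper proof to compare against; your pigeonhole count at each endpoint is the standard and expected argument, and your handling of the edge case where a path is the single edge $uv$ is the only subtlety, which you address correctly.
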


The following result resolves Conjecture 1 from \cite{ShMa18}.

\begin{theorem}
Let $G$ be a connected graph of order $n$. Then
\begin{enumerate}[(i)]
\item $PE (G) \geq 2(n - 1)$, with equality if and only if $G$ is a tree of order $n$.
\item $PE (G) \leq 2(n - 1)^2$, with equality if and only if $G$ is a complete graph $K_n$.
\end{enumerate}
\end{theorem}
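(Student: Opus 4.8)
The starting point is that $P$ has zero diagonal, so $\mathrm{tr}(P)=\sum_i\rho_i=0$. Writing $s^{+}=\sum_{\rho_i>0}\rho_i$ for the sum of the positive eigenvalues, the vanishing trace gives $\sum_{\rho_i<0}|\rho_i|=s^{+}$ as well, whence $PE(G)=\sum_i|\rho_i|=2s^{+}$. Since $P$ is nonnegative and, for connected $G$, irreducible (all off-diagonal entries are at least $1$), Perron--Frobenius makes $\rho_1=\rho(G)$ the spectral radius, so $|\rho_i|\le\rho_1$ for every $i$ and in particular $s^{+}\ge\rho_1$. Both inequalities of the theorem will be read off from this reformulation $PE(G)=2s^{+}$.

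For part (i), I would combine $PE(G)=2s^{+}\ge 2\rho_1$ with Theorem~\ref{thm1}(i), which gives $\rho_1\ge n-1$ and hence $PE(G)\ge 2(n-1)$. If $G$ is a tree then any two vertices are joined by a unique path, so $p_{ij}=1$ for $i\ne j$ and $P=J-I$, whose spectrum is $\{\,n-1,\,(-1)^{(n-1)}\,\}$; thus $PE(G)=2(n-1)$. Conversely, $PE(G)=2(n-1)$ forces equality in both $2\rho_1\le PE(G)$ and $\rho_1\ge n-1$, so $\rho_1=n-1$, and Theorem~\ref{thm1}(i) identifies $G$ as a tree.

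For part (ii) the clean reduction is that $PE(G)=2s^{+}\le 2(n-1)^2$ is equivalent to $s^{+}\le(n-1)^2$, and the latter follows from a uniform spectral lower bound $\rho_n\ge-(n-1)$. Indeed, if every eigenvalue is at least $-(n-1)$, then, using that the trace is zero and $P\ne 0$ (so there is at least one positive eigenvalue and hence at most $n-1$ negative ones), $s^{+}=\sum_{\rho_i<0}|\rho_i|\le(n-1)\cdot(n-1)=(n-1)^2$. Thus the whole upper bound rests on the lemma that $P+(n-1)I$ is positive semidefinite. To prove it I would use the combinatorial meaning of the entries: by Menger's theorem $p_{ij}$ equals the local vertex connectivity $\kappa(i,j)$, and local connectivity satisfies the ultrametric inequality $p_{ij}\ge\min\{p_{ik},p_{kj}\}$ for all $i,j,k$. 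Together with $0\le p_{ij}\le n-1$ (Theorems~\ref{thm1} and~\ref{thm2}) and the diagonal value $n-1$, this makes $P+(n-1)I$ a symmetric ultrametric matrix; such matrices are positive semidefinite, for instance because the ultrametric inequality lets one write $P+(n-1)I=\sum_S c_S\,\mathbf{1}_S\mathbf{1}_S^{\top}$ as a nonnegative combination over a laminar family of vertex subsets $S$, each rank-one block $\mathbf{1}_S\mathbf{1}_S^{\top}$ being positive semidefinite. This yields $\rho_n\ge-(n-1)$ and hence the bound. For the equality case, $PE(G)=2(n-1)^2$ forces all $n-1$ lower eigenvalues to equal $-(n-1)$ and the top one to equal $(n-1)^2$; then $P+(n-1)I$ has rank one with constant diagonal $n-1$, which pins its Perron vector to the uniform one and gives $P+(n-1)I=(n-1)J$, i.e. $P=(n-1)(J-I)$. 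Every pair of vertices then has $n-1$ internally disjoint paths, so by Theorem~\ref{thm2} the minimum degree is $n-1$ and $G=K_n$.

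The main obstacle is precisely this positive-semidefiniteness lemma, and I emphasise that it is a genuinely structural statement rather than a consequence of the entrywise bound alone. The box of symmetric zero-diagonal matrices with $0\le a_{ij}\le n-1$ does \emph{not} force $\rho_n\ge-(n-1)$: for $n=4$, the matrix $(n-1)$ times the adjacency matrix of $C_4$ has least eigenvalue $-2(n-1)$, and it is ruled out here only because a zero entry makes it fail both connectedness and the ultrametric inequality. Consequently the proof must exploit that the $p_{ij}$ are realizable vertex connectivities, and the delicate input is exactly the ultrametric inequality for numbers of vertex-disjoint paths; verifying it with care (in particular when one of the three vertices lies inside a separating set) is where I expect the real work to lie.
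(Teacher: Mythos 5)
Your part (i) is correct and is essentially the paper's argument: the zero trace gives $PE(G)=2s^{+}\ge 2\rho_1\ge 2(n-1)$, and the equality analysis reduces to Theorem~\ref{thm1}(i). The problem is in part (ii), where the entire upper bound hangs on the claim that the entries of $P$ satisfy the ultrametric inequality $p_{ij}\ge\min\{p_{ik},p_{kj}\}$. That inequality is a standard fact for local \emph{edge} connectivity, but it is false for local \emph{vertex} connectivity, which is what $p_{ij}$ is via Menger's theorem. Counterexample: let $G$ consist of two copies of $K_m$ ($m\ge 3$) glued at a single vertex $w$, and take $u$ in one copy and $v$ in the other. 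Then $p_{uw}=p_{wv}=m-1$ (the edge plus the $m-2$ two-edge paths inside the common clique), but $w$ is a cut vertex separating $u$ from $v$, so $p_{uv}=1<m-1=\min\{p_{uw},p_{wv}\}$. This is exactly the configuration you flagged as delicate (the intermediate vertex lying in a separator), and it does not merely require care --- it kills the inequality. Consequently $P+(n-1)I$ need not be an ultrametric matrix, the laminar rank-one decomposition is unavailable, and your key lemma $\rho_n\ge-(n-1)$ is left unproved. (I know of no counterexample to that spectral lower bound itself --- it may well be true --- but it would need an entirely different proof, and as you observed the entrywise bounds alone cannot supply it.)

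For comparison, the paper's proof of (ii) sidesteps any structural claim about $P$: it writes $PE(G)=\rho+\sum_{i\ge 2}\rho_i$, applies Cauchy--Schwarz to the last $n-1$ eigenvalues, bounds $\mathrm{Tr}(P^2)\le n(n-1)\Delta^2\le n(n-1)^4$ using Theorem~\ref{thm2}, and then verifies that $f(x)=x+\sqrt{(n-1)^4n-x^2(n-1)}$ is increasing for $x\le(n-1)^2$, so that Theorem~\ref{thm1}(ii) gives $PE(G)\le f\bigl((n-1)^2\bigr)=2(n-1)^2$ together with the equality case. If you want to keep your reduction $PE(G)=2s^{+}\le 2(n-1)^2$, you must either prove $\rho_n\ge-(n-1)$ by some other means or fall back on an averaging argument of the paper's type.
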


\begin{proof}
The trace of the matrix $P$ is 0 by definition, which is in turn the sum of all its eigenvalues. The first part directly follows from Theorem \ref{thm1},
$$
PE(G) \geq 2 \rho (G) \geq 2 (n - 1).
$$

For the second part we use Cauchy-Schwarz inequality:
$$
PE(G) = \rho(G) + \sum_{i = 2}^{n} \rho_i (G) \leq \rho(G) + \sqrt{(n-1)\left(\sum_{i = 2}^n \rho_i^2\right)}.
$$
Using Theorem \ref{thm2}, the trace of the squared matrix is clearly less than or equal to
$$
Tr(P^2) = \sum_{i = 1}^n \rho_i^2 \leq n(n-1) \cdot \Delta^2 \leq (n-1)^4 n.
$$

Combining the above, we get
$$
PE(G) \leq \rho + \sqrt{(n-1)^4n - \rho^2(n-1)}.
$$

The function $f(x) = x + \sqrt{(n-1)^4n - x^2(n-1)}$ is increasing for $x \leq (n-1)^2$, as the first derivative is non-negative:
$$
f'(x) = 1 - \frac{x(n-1)}{\sqrt{(n-1)^4n - x^2(n-1)}} \geq 0,
$$
which is equivalent with $(n-1)^4n - x^2(n-1) \geq x^2 (n-1)^2$.

The equality holds in both cases if and only if $\rho$ attains minimum or maximum values, which follows from Theorem \ref{thm1}. \qed
\end{proof}

\section{Path energy of unicyclic graphs}
\label{sec:2}

Let $U_{n,k}$ be a unicyclic graph of  order $n$ whose cycle is of  size $k$.
In the following we give  results that resolve Conjectures 2, 3 and 4 from~\cite{ShMa18}.

\begin{conjecture}
\label{con2}
Let $G$ be a unicyclic graph of  order $n$ whose cycle is of  size $k$. Then $PE(G)$ depends only on the parameters $n$ and $k$. For a fixed value of $n$, $PE(G)$ is a monotonically increasing function of $k$.
\end{conjecture}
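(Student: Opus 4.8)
The plan is to show that for a unicyclic graph the path matrix is \emph{completely} determined by $n$ and $k$, compute its spectrum explicitly, and read off both claims from a closed formula for $PE$. The starting point is a structural description of the entries $p_{ij}$. By Menger's theorem $p_{ij}$ equals the local vertex connectivity of $v_i$ and $v_j$, and two vertices admit two internally disjoint paths if and only if they lie on a common cycle. Since a unicyclic graph has cyclomatic number $1$, its only cycle is $C_k$, and $m$ internally disjoint paths would force $m-1$ independent cycles, so no pair supports more than two; moreover every off-cycle vertex is separated from the rest of the graph by the unique cut vertex joining its pendant tree to $C_k$. Hence $p_{ij}=2$ precisely when both $v_i$ and $v_j$ lie on the cycle, and $p_{ij}=1$ for every other pair $i\neq j$. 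Ordering the $k$ cycle vertices first, this gives
$$P = \begin{pmatrix} 2(J_k - I_k) & J \\ J^{\top} & J_{n-k} - I_{n-k} \end{pmatrix},$$
a matrix depending only on $n$ and $k$; this already settles the first assertion, that $PE(G)$ is a function of $n$ and $k$ alone.

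Next I would diagonalise $P$ via the equitable partition into cycle and non-cycle vertices. Vectors supported on and summing to zero over the cycle part are eigenvectors for $-2$ (multiplicity $k-1$), and the analogous vectors on the non-cycle part give the eigenvalue $-1$ (multiplicity $n-k-1$). The remaining two eigenvalues $\lambda_\pm$ are those of the quotient matrix $Q = \left(\begin{smallmatrix} 2(k-1) & n-k \\ k & n-k-1 \end{smallmatrix}\right)$, so $\lambda_+ + \lambda_- = n+k-3$ and $\lambda_+\lambda_- = k(n-k) - 2(n-1)$, with $\lambda_+=\rho(G)\ge n-1>0$ the Perron root. Because $\operatorname{tr}(P)=0$, the path energy equals twice the sum of the positive eigenvalues; the eigenvalues $-1,-2$ never contribute and $\lambda_+>0$, so only the sign of $\lambda_-$ matters. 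This yields the closed form
$$PE(G) = \max\Bigl(2(n+k-3),\ (n+k-3) + \sqrt{(n-k+1)^2 + 4k(k-1)}\Bigr),$$
the two branches corresponding to $\det Q \ge 0$ and $\det Q < 0$.

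Monotonicity in $k$ then follows by writing $PE$ as the maximum of $f_1(k)=2(n+k-3)$ and $f_2(k)=(n+k-3)+\sqrt{(n-k+1)^2+4k(k-1)}$ and noting that the maximum of two increasing functions is increasing. Here $f_1$ is trivially increasing, and for $f_2$ one computes $f_2'(k) = 1 + (5k-n-3)/\sqrt{D}$ with $D=(n-k+1)^2+4k(k-1)$; since $\sqrt{D}\ge n-k+1 \ge n+3-5k$ for all $k\ge 1$, the derivative is strictly positive on $[3,n]$. I anticipate the main obstacle to be the structural Step 1: establishing the exact value of $p_{ij}$, and in particular the uniform upper bound $p_{ij}\le 2$ via the cyclomatic-number argument. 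Once the matrix is pinned down, the spectral computation and the derivative estimate are routine; the few degenerate cases where a multiplicity vanishes (notably $k=n$, where the non-cycle block is empty and $P=2(J_n-I_n)$) can be verified directly.
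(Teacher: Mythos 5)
Your proposal is correct, and its skeleton coincides with the paper's: pin down $P(U_{n,k})$ as a two-block matrix determined by $n$ and $k$, extract the eigenvalues $-2$ (multiplicity $k-1$) and $-1$ (multiplicity $n-k-1$), find the remaining pair $\lambda_\pm$ with $\lambda_++\lambda_-=n+k-3$ and $\lambda_+\lambda_-=-(k^2-nk+2n-2)$, and then express $PE$ through the positive part of the spectrum. The differences are in execution, and two of them genuinely improve on the paper. First, you obtain $\lambda_\pm$ from the quotient matrix of the equitable partition rather than from the trace identities $\sum\rho_i=0$, $\sum\rho_i^2=\mathrm{tr}(P^2)$; this is equivalent but also supplies the eigenvectors, and you additionally justify the structural step $p_{ij}\in\{1,2\}$ via Menger and the cyclomatic number, which the paper simply asserts. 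Second, and more substantively, your formula $PE=\max\bigl(2(n+k-3),\,2\lambda_+\bigr)$ packages the sign analysis of $\lambda_-$ into a single expression valid for all $k$: the paper instead needs its Lemma 3.4 (that $\rho_2>0$ exactly when $n\ge 7$ and $3\le k\le n-3$) to split into regimes, and then a separate boundary comparison $2(n+(n-3)-3)<2\rho(U_{n,n-2})$ to glue monotonicity across the regime change, whereas "the maximum of two increasing functions is increasing" handles all transitions at once. Your derivative bound $\sqrt{D}\ge n-k+1\ge n+3-5k$ is also more direct than the paper's quadratic-root analysis of $5x^2-2(n+3)x+n+2$. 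The only point to make explicit in a final write-up is the endpoint $k=n$ (where the second block is empty and $P(C_n)=2(J_n-I_n)$ gives $PE=4(n-1)$); as you note, this is a direct verification, and in fact your closed formula extends to $k=n$ since $f_2(n)=4(n-1)>f_1(n)$.
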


First we determine the spectrum of the path matrix of unicyclic graphs.
If $k=n$, that is $U_{n,k}\cong C_n$, it is easy to see that $P(U_{n,k})=2J_{n}$,
where $J_n$ is the square matrix of the order $n$ whose all non-diagonal elements are equal to
one, and all diagonal elements are zero. Furthermore, as $J_n$ represents the adjacency matrix of the complete graph $K_n$ we obtain that

$$
Spec_{\bf P} (C_n) = \left(
\begin{array}{cccc}
(-2)^{n-1}, & 2(n-1)^{1}\\
\end{array}\right).
$$

Next we calculate the spectrum of $P(U_{n, k})$ for $k\leq n-1$.

\begin{theorem}
The spectrum of the path matrix of the unicyclic graph $U_{n,k}$, for $k\leq n-1$, is equal to
$$
Spec_{\bf P} (U_{n, k}) = \left(
\begin{array}{cccc}
(-2)^{k-1}, & (-1)^{n-k-1}, & \rho_2^{1}, & \rho_1^{1}\\
\end{array}\right),
$$
where
\begin{eqnarray}
\label{eq:mu2}
\rho_{1,2}=\frac{n+k-3\pm\sqrt{(n+k-3)^2+4(k^2-nk+2n-2)}}{2}.
\end{eqnarray}

\end{theorem}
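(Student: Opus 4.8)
The plan is to first pin down the entries of the path matrix $P(U_{n,k})$ explicitly and then diagonalize the resulting highly structured matrix. For the entries I would invoke Menger's theorem, which identifies $p_{uv}$ with the minimum size of a vertex set separating $u$ from $v$. Since a unicyclic graph has exactly one nontrivial block, namely the cycle $C_k$, every edge outside the cycle is a bridge and every off-cycle vertex is cut off from the rest of the graph by a single articulation vertex. I therefore expect to prove that $p_{uv}=2$ precisely when both $u$ and $v$ lie on the cycle, and $p_{uv}=1$ otherwise. This step alone already establishes the qualitative part of Conjecture~\ref{con2}, since it shows the spectrum, and hence $PE$, depends only on $n$ and $k$.

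Ordering the vertices so that the $k$ cycle vertices come first, the path matrix then takes the block form
$$
P = \begin{pmatrix} 2(\mathbf J_k - I_k) & \mathbf J_{k \times (n-k)} \\ \mathbf J_{(n-k)\times k} & \mathbf J_{n-k} - I_{n-k} \end{pmatrix},
$$
where $\mathbf J$ denotes an all-ones block of the indicated size. I would then exhibit three mutually orthogonal families of eigenvectors that together span $\R^n$. Vectors supported on the cycle block whose coordinates sum to zero are annihilated on the tree block and scaled by $-2$ on the cycle block, producing the eigenvalue $-2$ with multiplicity $k-1$; symmetrically, zero-sum vectors supported on the tree block give the eigenvalue $-1$ with multiplicity $n-k-1$.

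The orthogonal complement of these two families is the two-dimensional space of vectors that are constant on each block, say equal to $a$ on the cycle vertices and $b$ on the tree vertices. Restricting $P$ to this invariant subspace reduces the problem to the $2\times 2$ matrix
$$
R = \begin{pmatrix} 2(k-1) & n-k \\ k & n-k-1 \end{pmatrix},
$$
whose eigenvalues are $\rho_{1,2} = \tfrac{1}{2}\bigl(\operatorname{tr} R \pm \sqrt{(\operatorname{tr} R)^2 - 4\det R}\,\bigr)$. A direct computation giving $\operatorname{tr} R = n+k-3$ and $\det R = kn-k^2-2n+2$ should then reproduce formula~\eqref{eq:mu2}, and the multiplicity count $(k-1)+(n-k-1)+2=n$ confirms that all eigenvalues have been accounted for.

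I expect the only genuinely delicate step to be the first: rigorously arguing that no pair of vertices with at least one off the cycle can admit two internally vertex-disjoint paths. Here the acyclicity of the tree branches and the uniqueness of the cycle are essential, since any excursion into a tree must return through the same cut vertex and hence cannot contribute a second independent path. Once the matrix $P$ is determined, the remaining diagonalization is routine structured linear algebra.
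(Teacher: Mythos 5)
Your proposal is correct, and while it follows the same overall skeleton as the paper (determine $P(U_{n,k})$ as a two-block matrix, extract the eigenvalues $-2$ and $-1$ with multiplicities $k-1$ and $n-k-1$, then find the remaining pair), the linear-algebra techniques differ at both stages. The paper simply asserts the block form of $P$ by labelling the cycle vertices first, then performs row reduction on $\det(P-\lambda I)$ to show $(\lambda+2)^{k-1}(\lambda+1)^{n-k-1}$ divides the characteristic polynomial, and finally determines $\rho_1,\rho_2$ from the two trace identities $\operatorname{tr}P=0$ and $\operatorname{tr}P^2=\sum\rho_i^2$, which forces a computation of the sum of squared entries of $P$. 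You instead justify the entries of $P$ via the block structure of a unicyclic graph (two internally disjoint $u$--$v$ paths form a cycle through $u$ and $v$, and the only cycle is $C_k$, so $p_{uv}=2$ exactly when both endpoints lie on the cycle and $p_{uv}=1$ otherwise --- a gap the paper leaves implicit), exhibit explicit zero-sum eigenvectors for $-2$ and $-1$, and obtain $\rho_{1,2}$ as the eigenvalues of the $2\times 2$ quotient matrix of the equitable partition. I checked your quotient matrix: $\operatorname{tr}R=n+k-3$ and $\det R=kn-k^2-2n+2$, so $(\operatorname{tr}R)^2-4\det R=(n+k-3)^2+4(k^2-nk+2n-2)$, reproducing \eqref{eq:mu2}; the dimension count $(k-1)+(n-k-1)+2=n$ closes the argument. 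Your route is arguably more self-contained (no $\operatorname{tr}P^2$ computation, and the entries of $P$ are actually proved), while the paper's trace computation is reused later when it evaluates $\operatorname{tr}(P^2)$ anyway; both are fully rigorous.
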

\begin{proof}
The vertices of $U_{n,k}$ can be labeled so that
$$
P(U_{n,k})=\left[
\begin{array}{cc}
P(C_k) & 1 \\
1 & J_{n-k} \\
\end{array}\right]=
\left[
\begin{array}{cc}
2J_k & 1 \\
1 & J_{n-k} \\
\end{array}\right].
$$
Now, we will determine the roots of the characteristic polynomial
$$
\det(P(U_{n,k})-\lambda I_{n}) =  \left|
\begin{array}{cc}
2J_k-\lambda I_k & 1 \\
1 & J_{n-k} -\lambda I_{n-k} \\
\end{array}\right|,
$$
where $I_s$ is the identity matrix of  order $s$. By subtracting the $s$-th row from the $(s-1)$-th row, for $2\leq s\leq k$, and
by subtracting the $s$-th row from the $(s+1)$-th row, for $k+1\leq s\leq n-1$, we obtain that the $s$-th row is equal to
$$
[\underbrace{0,\ldots,0,\lambda+2,-\lambda-2}_s,0,\ldots,0]
$$
for $2\leq s\leq k$, and the $l-$th row is equal to
$$
[\underbrace{0,\ldots,0,-\lambda-1,\lambda+1}_{l+1},0,\ldots,0]
$$
for $k+1\leq l\leq n-1$. Therefore, we conclude that
$$(\lambda+2)^{k-1}(\lambda+1)^{n-k-1}\mid \det(P(U_{n,k})-\lambda I_{n})$$
and hence it is proved that
$P(U_{n,k})$ has  eigenvalues $-2$ and $-1$ with  multiplicities $k-1$ and $n-k-1$, respectively.
Notice that $n-k-1\geq 0$, as $k+1\leq n$.

Now, we will determine the other two eigenvalues. As the trace of $P(U_{n,k})$ is equal to zero and the trace of the squared matrix $P(U_{n,k})^2$ is equal to the sum of the squares of the entries of $P(U_{n,k})$ we have that
\begin{eqnarray*}
\sum_{i=1}^{n} \rho_i &=& tr(P(U_{n,k}))=0\\
\sum_{i=1}^{n} \rho_i^2 &=&tr (P(U_{n,k})^2)=4(k^2-k)+(n^2-k^2-(n-k)).
\end{eqnarray*}
Furthermore, since we have already concluded that $$(\rho_3,\ldots,\rho_n)=(\underbrace{-2,\ldots,-2}_{k-1},\underbrace{-1,\ldots,-1}_{n-k-1})$$
it holds that
\begin{eqnarray}
\rho_1+\rho_2 &=& 2(k-1)+(n-k-1) \label{eq:sum_eigens}\\
\rho_1^2+\rho_2^2 &=&4(k^2-k)+(n^2-k^2-(n-k))-4(k-1)-(n-k-1).
\end{eqnarray}
By substituting the variable $\rho_2$ from first equation in the second, we get
$$
2\rho_1^2-2(n+k-2)\rho_1-2k^2+2nk-4n+4=0,
$$
which completes the proof.
\qed

\end{proof}

From the above theorem it directly follows that
$$
\rho_{1}=\frac{n+k-3+\sqrt{(n+k-3)^2+4(k^2-nk+2n-2)}}{2}
$$ is the spectral radius of $U_{n,k}$ for $3\leq k\leq n-1$.

By analyzing the above formula we can obtain Propositions 6, 7 and 8 from \cite{ShMa18}.
Indeed, if we denote
$$f(x)=n+x-3+\sqrt{(n+x-3)^2+4(x^2-nx+2n-2)}$$
then it is sufficient to prove that
$f(x)$ is a monotonically increasing function of $x$, for $2\leq x\leq n-1$, to get Proposition 6. The first derivative of $f(x)$ is equal to
$1+\frac{5x-n-3}{\sqrt{D(x)}}$, where $D(x)=(n+x-3)^2+4(x^2-nx+2n-2)$. If $x\geq \frac{n+3}{5}$ then it is clear that $f'(x)\geq 0$.
Now, for $1\leq x<\frac{n+3}{5}$ we prove that $\sqrt{D(x)}>n+3-5x$. After a short calculation, it can be obtained that $D(x)^2>(5x-n-3)^2$ if and only if
$5x^2-2(n+3)x+n+2<0$. Since this quadratic function is convex, it is less than zero in the interval $(x_1,x_2)$, where $x_{1,2}=\frac{n+3\pm\sqrt{(n+3)^2-5(n+2)}}{5}$.
It is easy to check that $x_2\leq 1$ and $x_1\geq \frac{n+3}{5}$ and therefore we conclude $\sqrt{D(x)}>n+3-5x$ and $f'(x)>0$ for $1\leq x<\frac{n+3}{5}$.

From $\frac{f(n-1)}{2}=2(n-1)<(n-1)^2$ (for $n\geq 4$)  follows Proposition 7.
Finally, we can calculate the minimal spectral radius in the class of unicyclic graphs of  order $n$ and it is attained for $k=3$:
$$
\min_{3\leq k\leq n}\rho(U_{n,k}) =\frac{f(3)}{2}=\frac{n+\sqrt{n^2-4n+28}}{2}.
$$

\begin{lemma}
\label{lem:mu2 positive}
The eigenvalue $\rho_2$ of the path matrix of the unicyclic graph $U_{n,k}$, for $k\leq n-1$, is greater than zero if and only if
$n\geq 7$ and $3\leq k\leq n-3$.
\end{lemma}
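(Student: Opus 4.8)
The plan is to reduce the sign of $\rho_2$ to the sign of a single quadratic in $k$. Since $\rho_1$ is the path spectral radius of $U_{n,k}$, Theorem~\ref{thm1}(i) gives $\rho_1 \ge n-1 > 0$; hence $\rho_2$ has the same sign as the product $\rho_1\rho_2$, and in particular $\rho_2 > 0$ if and only if $\rho_1\rho_2 > 0$, while $\rho_2 = 0$ exactly when $\rho_1\rho_2 = 0$. Because $\rho_1,\rho_2$ are the two roots given in \eqref{eq:mu2}, multiplying them (using $(a+b)(a-b)=a^2-b^2$, or equivalently Vieta's formulas) collapses the square root and yields
\[
\rho_1\rho_2 = -\,(k^2 - nk + 2n - 2).
\]
Setting $g(k) := k^2 - nk + 2n - 2$, the problem is now purely algebraic: $\rho_2 > 0$ if and only if $g(k) < 0$.

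Next I would analyse $g(k)$ as a quadratic in $k$ for fixed $n$. It is convex, and its discriminant in $k$ is $n^2 - 8n + 8$, whose zeros are $4 \pm 2\sqrt2$. Thus $g$ has real roots, and can take negative values, precisely when $n^2 - 8n + 8 \ge 0$, i.e. for integers when $n \ge 7$. This already gives the necessity of the hypothesis $n \ge 7$: for $n \le 6$ the quadratic $g$ is strictly positive for every $k$, forcing $\rho_2 < 0$.

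For $n \ge 7$ it remains to identify which integers $k$ in the admissible range $3 \le k \le n-1$ satisfy $g(k) < 0$. Here I would exploit the symmetry of $g$ about $k = n/2$: since $3$ and $n-3$ are symmetric about $n/2$, one finds $g(3) = g(n-3) = 7 - n$, while the neighbouring boundary values $g(2) = g(n-2) = 2$ and $g(n-1) = n-1$ are all positive. Convexity then finishes the interior: on $[3, n-3]$ the maximum of a convex function is attained at an endpoint, so $g(k) \le \max\{g(3), g(n-3)\} = 7 - n$, which is negative once $n \ge 8$; and at the remaining admissible values $k = n-2$ and $k = n-1$ the function is positive, placing them beyond the larger root of $g$. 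This yields $g(k) < 0 \iff 3 \le k \le n-3$, hence the claimed equivalence.

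The main delicate point is the boundary case $n = 7$. There the discriminant $n^2-8n+8$ equals $1$ and the two roots of $g$ collapse onto the integers $3$ and $n-3 = 4$, where $g$ vanishes rather than being negative (indeed $g(3)=g(4)=0$, giving $\rho_1\rho_2 = 0$ and $\rho_2 = 0$). This case sits exactly on the threshold of the strict inequality and must be examined separately from $n \ge 8$; it is the only step where the distinction between $\rho_2 > 0$ and $\rho_2 = 0$ has to be checked with care. Everything else is the routine convexity-and-endpoints computation above, once the reduction $\rho_2 > 0 \iff g(k) < 0$ is in place.
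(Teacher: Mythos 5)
Your proposal is correct and follows essentially the same route as the paper: reduce the sign of $\rho_2$ to the sign of $g(k)=k^2-nk+2n-2$ and then locate where this quadratic is negative. The differences are minor in substance --- you get the reduction from Vieta's product $\rho_1\rho_2=-g(k)$ together with $\rho_1>0$, whereas the paper reads it off the explicit formula \eqref{eq:mu2}; and you decide the sign of $g$ by evaluating at the integer points $k=2,3,n-3,n-2,n-1$ and invoking convexity, whereas the paper bounds the real roots $x_{1,2}=\frac{n\pm\sqrt{n^2-8n+8}}{2}$ and then rounds. Your endpoint computation is arguably cleaner, since it avoids manipulating the nested radicals. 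More importantly, your flagged ``delicate point'' at $n=7$ is not merely delicate: it shows the lemma's equivalence fails as stated. For $n=7$ and $k\in\{3,4\}$ one has $g(k)=7-n=0$, hence $\rho_1\rho_2=0$ and $\rho_2=0$, so $\rho_2$ is \emph{not} greater than zero even though $n\geq 7$ and $3\leq k\leq n-3$. The paper's own proof slips exactly here: it derives $x_2\leq 3$ with equality at $n=7$ and then writes $3=\lfloor x_2\rfloor+1$, which is false when $x_2=3$ (the correct lower bound would be $4$, and even $k=4=x_1$ gives $g=0$). The statement is salvaged by replacing ``greater than zero'' with ``nonnegative'' (or by requiring $n\geq 8$ for strict positivity), and the subsequent energy formula $PE(U_{n,k})=2(\rho_1+\rho_2)=2(n+k-3)$ survives either way since a zero eigenvalue contributes nothing to $\sum_i|\rho_i|$. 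So your argument is sound and, on this boundary case, more careful than the paper's.
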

\begin{proof}
According to (\ref{eq:mu2}) we conclude that $\rho_2>0$ if and only if $k^2-nk+2n-2<0$.
If we denote $k^2-nk+2n-2$ by $g(k)$, then we have that $g(k)<0$ if and only if $k$ belongs to the interval
$(x_1,x_2)$, where $x_{1,2}=\frac{n\pm\sqrt{n^2-8n+8}}{2}\in \R$. Furthermore, $x_{1,2}\in\R$ if and only if
$n^2-8n+8\geq 0$ and this is the case for $(n-4)^2-8\geq 0\Leftrightarrow n\geq 4+\sqrt{8}$. Therefore, $x_{1,2}\in \R$ if and only if $n\geq 7$.
On the other hand, we  notice that
\begin{eqnarray*}
 x_2 &=& \frac{n-4-\sqrt{(n-4)^2-8}}{2}+2>2
\end{eqnarray*}
and
\begin{eqnarray*}
 x_2 &=& \frac{n-6-\sqrt{(n-6)^2+(4n-28)}}{2}+3\leq 3,
\end{eqnarray*}
as $4n-28\geq 0$. From Vieta's formulas we have that $x_1+x_2=n$ and therefore $n-3\leq x_1< n-2$.
Finally, since $k$ is an integer we conclude that $3=\lfloor x_2 \rfloor+1\leq k\leq \lfloor x_1 \rfloor=n-3$.\qed
\end{proof}

Now, we calculate the path energy of a unicyclic graph of the order $n$ and cycle length $k$, such that $k\leq n-1$.
According to the previous lemma the spectrum of $U_{n,k}$ has two positive eigenvalues $\rho_1$ and $\rho_2$
if and only if $n\geq 7$ and $3\leq k\leq n-3$. In that case, the energy of $U_{n,k}$ is equal to $2(\rho_1+\rho_2)$ and
from (\ref{eq:sum_eigens}) we further have that it is equal to $2(n+k-3)$. Now, if $n< 7$ or $n-2\leq k\leq n-1$
then we conclude that $\rho_1$ is the only positive eigenvalue-spectral radius and the path energy is equal to $2\rho_1$.

\begin{theorem}
The path energy of the unicyclic graph $U_{n,k}$ for $3 \leq k\leq n-1$, is equal to
\begin{eqnarray*}
PE (U_{n, k}) &=&\left\{
\begin{array}{rl}
2(n+k-3), & n\geq 7 \mbox{ and } 3\leq k\leq n-3  \\
2\rho(U_{n,k}), & n< 7 \mbox{ or } n-2\leq k\leq n-1 \\
\end{array} \right.
\end{eqnarray*}
where $\rho(U_{n,k})=\rho_1=\frac{n+k-3+\sqrt{(n+k-3)^2+4(k^2-nk+2n-2)}}{2}$ is the spectral radius of $U_{n,k}$.
\end{theorem}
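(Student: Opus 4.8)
The plan is to read the path energy straight off the spectrum established in the preceding theorem, $Spec_{\bf P}(U_{n,k}) = ((-2)^{k-1}, (-1)^{n-k-1}, \rho_2^{1}, \rho_1^{1})$, and to branch on the sign of $\rho_2$, which is precisely what Lemma \ref{lem:mu2 positive} decides. First I would isolate the part of $\sum_i |\rho_i|$ coming from the eigenvalues $-2$ and $-1$, whose absolute values are insensitive to any sign issues: these contribute
$$
2(k-1) + (n-k-1) = n+k-3.
$$
Since the text already identifies $\rho_1$ as the spectral radius, Theorem \ref{thm1}(i) gives $\rho_1 = \rho(U_{n,k}) \geq n-1 > 0$, so $|\rho_1| = \rho_1$ in every case, and therefore
$$
PE(U_{n,k}) = (n+k-3) + \rho_1 + |\rho_2|.
$$

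The decisive step is to resolve $|\rho_2|$ via Lemma \ref{lem:mu2 positive}. In the regime $n \geq 7$ and $3 \leq k \leq n-3$ the lemma yields $\rho_2 > 0$, so $|\rho_2| = \rho_2$, and the trace identity (\ref{eq:sum_eigens}), namely $\rho_1 + \rho_2 = 2(k-1) + (n-k-1) = n+k-3$, gives
$$
PE(U_{n,k}) = (n+k-3) + (\rho_1 + \rho_2) = 2(n+k-3).
$$
In the complementary regime $n < 7$ or $n-2 \leq k \leq n-1$ the lemma gives $\rho_2 \leq 0$, so $|\rho_2| = -\rho_2$ and, using $n+k-3 = \rho_1 + \rho_2$ once more,
$$
PE(U_{n,k}) = (\rho_1 + \rho_2) + \rho_1 - \rho_2 = 2\rho_1 = 2\rho(U_{n,k}).
$$

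I do not expect a genuine obstacle here: with the spectrum and Lemma \ref{lem:mu2 positive} in hand the argument reduces to bookkeeping of signs. The only points deserving a line of care are confirming that $\rho_1$ is indeed the spectral radius, so that $|\rho_1| = \rho_1$ carries no ambiguity (immediate from Theorem \ref{thm1}(i)), and checking that the two closed forms agree along the threshold $k^2 - nk + 2n - 2 = 0$ where $\rho_2 = 0$; there both expressions collapse to $2\rho_1 = 2(n+k-3)$, so the piecewise formula is consistent.
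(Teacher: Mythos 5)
Your proposal is correct and follows essentially the same route as the paper: read the energy off the explicit spectrum, use Lemma \ref{lem:mu2 positive} to decide the sign of $\rho_2$, and apply the trace identity $\rho_1+\rho_2=n+k-3$ from (\ref{eq:sum_eigens}) to simplify each branch. The paper phrases this as ``the energy equals twice the sum of the positive eigenvalues'' (since the trace is zero), which is just a repackaging of your sign bookkeeping.
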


Since it has already been shown that $\rho(U_{n,k})$ is an increasing function of $k$,
and as $2(n+k-3)$ is increasing as well, it remains to prove that
 $$2(n+(n-3)-3)<2\rho(U_{n,n-2}).$$
Indeed, as $2\rho(U_{n,k})=n+k-3+\sqrt{(n+k-3)^2+4g(k)}$, where $g(k)=k^2-nk+2n-2$, we conclude that
$$2\rho(U_{n,k})>n+k-3+\sqrt{(n+k-3)^2+0}=2(n+k-3)>2(n+(n-3)-3),$$
for $n-2\leq k\leq n-1$ (in the proof of Lemma \ref{lem:mu2 positive} we  used the fact that $g(k)>0$ for $n-2\leq k\leq n-1$).
Therefore,  we see that $PE (U_{n, k})$
is an increasing function of $k$ and hence we prove Conjecture \ref{con2} from \cite{ShMa18}.

Conjectures 3 and 4 can be unified and generalized in the following way:
\begin{theorem}
Let $G$ be a unicyclic graph of order $n$. Then
\begin{enumerate}[(i)]
\item $PE (G) \leq 4(n-1)$ with equality if and only if $G \cong C_n$,
\item $PE (G) \geq n+\sqrt{n^2-4n+28}$ with equality if and only if $G \cong U_{n,3}$.
\end{enumerate}
\end{theorem}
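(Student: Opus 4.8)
The plan is to reduce both parts to the one-parameter family $U_{n,k}$. Because the path matrix of a unicyclic graph depends only on $n$ and the cycle length $k$ — any two distinct vertices admit two vertex-disjoint paths iff both lie on the unique cycle, and one otherwise, giving exactly the block form $[\,2J_k\mid 1\,;\,1\mid J_{n-k}\,]$ — every unicyclic graph of order $n$ realizes one of the spectra computed above, with $PE$ a function of $k$ alone. I have already shown that $PE(U_{n,k})$ is strictly increasing in $k$ on $3\le k\le n-1$ and that the cycle value strictly exceeds $PE(U_{n,n-1})$. Hence the maximum of $PE$ over all unicyclic graphs is attained at $k=n$ (the cycle) and the minimum at $k=3$, and it remains only to evaluate the endpoints and match them to the claimed bounds.

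For part (i) I would read off $PE(C_n)=4(n-1)$ from $Spec_{\bf P}(C_n)=[(-2)^{n-1},2(n-1)^{1}]$, since $2(n-1)+(n-1)\cdot 2=4(n-1)$. For any other unicyclic graph the largest admissible energy is $PE(U_{n,n-1})=2\rho(U_{n,n-1})$. Substituting $k=n-1$ into the spectral-radius formula, where $k^2-nk+2n-2=n-1$, yields $PE(U_{n,n-1})=2(n-2)+2\sqrt{(n-2)^2+(n-1)}$. Isolating the radical and squaring reduces the desired inequality $PE(U_{n,n-1})<4(n-1)$ to $n>1$, so it holds for all admissible $n$. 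This gives the strict bound and pins down $C_n$ as the unique maximizer.

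For part (ii) the cleanest route to the stated value is the uniform estimate $PE(G)\ge 2\rho(G)$: since $\mathrm{tr}\,P=0$, the non-principal eigenvalues sum to $-\rho$, so the sum of their absolute values is at least $\rho$, with equality precisely when $\rho$ is the only positive path-eigenvalue. Combining this with the minimality $\rho(G)\ge\rho(U_{n,3})=\tfrac12\bigl(n+\sqrt{n^2-4n+28}\bigr)$ established above as $f(3)/2$ (which also dominates $\rho(C_n)=2(n-1)$) gives $PE(G)\ge n+\sqrt{n^2-4n+28}$. Equality forces $\rho(G)=\rho(U_{n,3})$, and strict monotonicity of $\rho$ in $k$ pins this to $k=3$, i.e. $G\cong U_{n,3}$.

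The hard part is exactly the equality analysis in (ii). Minimizing $PE(U_{n,k})$ directly does not reproduce the stated number for large $n$: by the path-energy formula, once $n\ge 8$ the graph $U_{n,3}$ already has a second positive eigenvalue $\rho_2>0$, so $PE(U_{n,3})=2(\rho_1+\rho_2)=2n$, which strictly exceeds $n+\sqrt{n^2-4n+28}$. The bound in (ii) must therefore be obtained through $PE\ge 2\rho$ rather than by evaluating $PE$ at $k=3$, and it is attained in the full numerical sense exactly when $U_{n,3}$ has a single positive eigenvalue. By the analysis in Lemma~\ref{lem:mu2 positive} this is the regime $n\le 7$ (at $n=7$ one has $\rho_2=0$, where both expressions coincide with $2n=14$). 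I would accordingly record the equality case as $G\cong U_{n,3}$, while flagging that the displayed value is sharp only for small orders and that $2n$ is the true minimum once $n\ge 8$.
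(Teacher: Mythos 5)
Your part (i) follows the paper's proof essentially verbatim: reduce to the one\-/parameter family $U_{n,k}$ via the already established monotonicity of $PE(U_{n,k})$ in $k$, evaluate $PE(C_n)=4(n-1)$ from the spectrum of $2J_n$, and check $PE(U_{n,n-1})=2(n-2)+2\sqrt{(n-2)^2+(n-1)}<4(n-1)$; your reduction of this last inequality to $n>1$ is correct.

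For part (ii) you diverge from the paper, and you are right to do so. The paper's proof simply asserts $\min_{k}PE(U_{n,k})=PE(U_{n,3})=2\rho(U_{n,3})=n+\sqrt{n^2-4n+28}$, but this contradicts the paper's own formula for $PE(U_{n,k})$: since $g(3)=9-3n+2n-2=7-n<0$ for $n\ge 8$, the graph $U_{n,3}$ then has $\rho_2>0$ and $PE(U_{n,3})=2(\rho_1+\rho_2)=2n$, which strictly exceeds $n+\sqrt{n^2-4n+28}$ (the two values coincide exactly at $n=7$, where $\rho_2=0$). Your route --- $PE(G)\ge 2\rho(G)$ from $\mathrm{tr}\,P=0$, combined with the minimality of $\rho(U_{n,3})$ among unicyclic spectral radii --- is the correct way to obtain the displayed inequality for all $n$, and your observation that equality is attained only for $n\le 7$, so that for $n\ge 8$ the true minimum is $2n$ (attained at $U_{n,3}$) and the stated bound is strict for every unicyclic graph of that order, is a genuine correction to the theorem as the paper states and proves it. One small wording slip: what you need is $\rho(C_n)=2(n-1)\ge\rho(U_{n,3})$, i.e.\ the cycle's spectral radius dominates that of $U_{n,3}$, not the reverse as your parenthetical reads; the inequality itself is true and your argument goes through.
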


\begin{proof}
The above discussion implies that $PE (U_{n, k})$, for $3\leq k\leq n-1$, is maximal for $k=n-1$. Since
$PE (C_{n})=4(n-1)$ it remains to compare the values $PE (U_{n, n-1})$ and $PE (C_{n})$. It can be directly verified that the inequality
$$
PE (U_{n, n-1})=2n-4+\sqrt{(2n-4)^2+4((n-1)^2-n(n-1)+2n-2)}\leq 4(n-1)
$$
holds if and only if  $n\geq 3$. 

Moreover, the minimal path energy in the class of unicyclic graphs of  order $n$ is attained for $k=3$:
$$
\min_{3\leq k\leq n} PE(U_{n,k}) =n+\sqrt{n^2-4n+28}.
$$
\qed
\end{proof}

\section{An efficient algorithm for computing the matrix $P(G)$}
\label{sec:4}



In order to find the maximum number of vertex disjoint paths between two vertices, we can alternatively look at the  problem of finding the maximum number of edge disjoint paths. For all vertices except fixed vertices $x$ and $y$, split vertex $v$ into $v_{in}$ and $v_{out}$ with an edge $v_{in} \rightarrow v_{out}$. If we have an edge $uv$ in the original graph, this gets converted to two directed edges $u_{out} \rightarrow v_{in}$ and $v_{out} \rightarrow u_{in}$.

Using the Max-Flow Min-Cut theorem, the problem is now equivalent to computing the maximum flow for any two pairs of vertices. We can use the Ford-Fulkerson algorithm \cite{CoLe01} for computing the maximum flow in $O(|E| \cdot \max|F|) = O(|E| |V|)$ as we have that all edge capacities are equal to 1 in the graph $G$. For all pairs of vertices this gives us an algorithm of complexity $O(|E| |V|^3)$.

\medskip

A biconnected graph is a connected and non-separable graph, meaning that if any one vertex were to be removed, the graph will remain connected. The running time can be further speed up by finding all articulation points and biconnected components in time $O(|E| + |V|)$ as only within biconnected components the values of the matrix $P$ can be larger than 1. This can be done in a preprocessing step.

\medskip

We plan to use this efficient algorithm to continue studying the path energy of graphs, in particular bicyclic and biconnected graphs.

\end{document}